\newtheorem{lemma}{Lemma}
\def\bm{\boldsymbol}
\def\rmd{\mathrm{d}}
\newcommand{\unit}{1\!\!1}
\newcommand*{\defeq}{\mathrel{\rlap{%
                     \raisebox{0.3ex}{$\m@th\cdot$}}%
                     \raisebox{-0.3ex}{$\m@th\cdot$}}%
                     =}
\def\x{\end{document}}
\begin{document}

\title{Exact energy quantization condition for single Dirac particle in one-
dimensional (scalar) potential well}

\author{Siddhant Das}
\affiliation{Elite master program `Theoretical and Mathematical Physics'\\
Arnold Sommerfeld Center, Ludwig-Maximilians-Universit\"{a}t M\"{u}nchen}

\begin{abstract}
We present an exact quantization condition for the time independent solutions 
(energy eigenstates) of the one-dimensional Dirac equation with a scalar 
potential well characterized by only two `effective' turning points (defined 
by the roots of $V(x)+mc^2=\pm E$) for a given energy $E$ and satisfying $mc^
2+\min V(x)\geq 0$. This result generalizes the previously known 
non-relativistic quantization formula and preserves many physically desirable 
symmetries, besides attaining the correct non-relativistic limit. Numerical 
calculations demonstrate the utility of the formula for computing accurate 
energy eigenvalues.
\end{abstract}
\maketitle
%%%%%%%%%%%%%%%%%%%%%%%%%%%%%%%%%%%%%%%%%%%%%%%%%%%%%%%%%%%%%%%%%%%%%%%%%%%%%%%%%%%%%%%%%%
\section{Introduction}\label{sec1}

In this paper we present a quantization formula for the one-dimensional Dirac 
equation using the analytical transfer matrix (ATM) method, thereby 
extending the non-relativistic analogue for the Schr\"{o}dinger equation 
found by Cao et al. \cite{1,2,3}. We focus on the relativistic energy levels 
offered by a simple confining scalar potential well with two turning points.

The high accuracy of the eigenvalues computable from this formula together 
with other applications like ground state reconstruction should prove to be 
useful in many applications of the Dirac equation, especially in the context 
of solid state physics \cite{7,rev}.

Our result can be thought of as a completion of well known semi-classical 
quantization formulae (like Bohr--Sommerfeld and WKB) in the following 
sense. The semi-classical quantization formulae provide reliable estimates 
of the energy eigenvalues only in the limit of large quantum numbers, while 
the ATM-quantization formula gives exact eigenvalues for all quantum 
numbers. As a result, Cao et al. additionally characterize this formula as 
an `exact quantization formula.'

In anticipation of the derivation given below, we would like to pursue this 
comparison a little further. The reader will recall that, in the WKB 
quantization formula, for instance, the boundary conditions at the turning 
points when applied to a suitably chosen approximate ansatz (for the actual 
wave function), lead to quantization of energy. However, the ATM method is 
based on transfer matrices which, in the appropriate limit `recovers' the 
exact wave function $\Psi$ and the (corresponding) energy eigenvalue $E$. 
Structurally, the exact quantization formula contains the usual WKB term 
with a non-trivial correction contributed by the so-called `scattered sub-
waves', which is of great significance (discussed in Section~\ref{sec2}).

Particularly, for the Dirac equation the negative energy (antiparticle) 
solutions make the energy spectrum unbounded from below, which for general 
potentials is difficult to account for using the ATM method. Also, the 
generalization of this method for even the Schr\"{o}dinger equation to 
potentials having more than two classical turning points ($x$ for which $V(x
)=E$) is not clear at present. This difficulty translates to our inability 
to obtain the quantization condition for the Dirac problem for potential 
wells that either (1) give more than two classical turning points, or (2) 
satisfy $mc^2+\min V(x)<0$, or both. Owing to these limitations, we narrow 
our focus to potential wells with only two classical turning points and 
satisfying  $mc^2+\min V(x)\geq 0$. For this restricted class of potential 
wells we can apply the ATM method successfully.

Following up with the derivation in Section~\ref{sec2} we point to some 
desirable symmetries of the quantization formula in Section~\ref{sec3} and 
discuss its non-relativistic limit. Section~\ref{sec4} is devoted to 
numerical results and applications. We conclude in Section~\ref{sec5} 
outlining prospects of further study.

%%%%%%%%%%%%%%%%%%%%%%%%%%%%%%%%%%%%%%%%%%%%%%%%%%%%%%%%%%%%%%%%%%%%%%%%%%%%%%%%%%%%%%%%%%

\section{Formulation}\label{sec2}

Consider the one-particle Dirac Hamiltonian $H=c\bm{\alpha}p+\bm{\beta}(mc^2+V
)$, where $p=-i\hbar\rmd /\rmd x$ is the momentum operator and $m$ ($c$) is 
the rest mass of the particle (speed of light). We choose to represent the 
Dirac matrix $\bm{\alpha}(\bm{\beta})$ by the Pauli matrix $\bm{\sigma}_y (
\bm{\sigma}_z)$, which has the advantage that the two-component wave 
function $\Psi=
\begin{pmatrix}\psi_1&\psi_2
\end{pmatrix}^\top$ can be chosen 
to be real \cite{4,5}. The time independent Dirac equation prescribes the 
eigenvalue problem $H\Psi=E\Psi$, where $E$ is the energy of the particle. 
Further, the property $H\Psi'=-E\Psi'\Rightarrow\Psi'=\bm{\sigma}_x\Psi$ 
shows that the positive and negative solutions occur in pairs. Hence, it 
suffices to consider positive energy solutions alone.

The (reduced) Compton wavelength $\hbar/mc$ and rest mass energy $mc^2$ 
provide natural length and energy scales in this problem. Thus, expressing 
every quantity in dimensionless form $x\rightarrow\hbar/mc\ \xi$,   $E
\rightarrow mc^2\varepsilon$, $V\rightarrow mc^2\vartheta$, the Dirac 
equation translates into the coupled first order system of equations
\begin{gather}
\dot{\psi}_1=(1+\vartheta(\xi)+\varepsilon)\psi_2\label{eq:1}\\
\dot{\psi}_2=(1+\vartheta(\xi)-\varepsilon)\psi_1\label{eq:2}
\end{gather}
the overhead dot denoting differentiation w.r.t. $\xi$. We identify solutions 
of the equations $1+\vartheta(\xi)=\pm\varepsilon$ as effective turning 
points. For the schematic potential well depicted in Fig. \ref{fig1}, two 
turning points $\xi_{L,R}(\xi_L<\xi_R)$ are obtained for a typical energy $
\varepsilon$. Note that more than two effective turning points would result 
for $\vert\varepsilon\vert<-(1+\min\vartheta)$ unless the restrictions noted 
in Section~\ref{sec1} are enforced on $V$.
\begin{figure}
\begin{center}
\includegraphics[width=0.9\columnwidth]{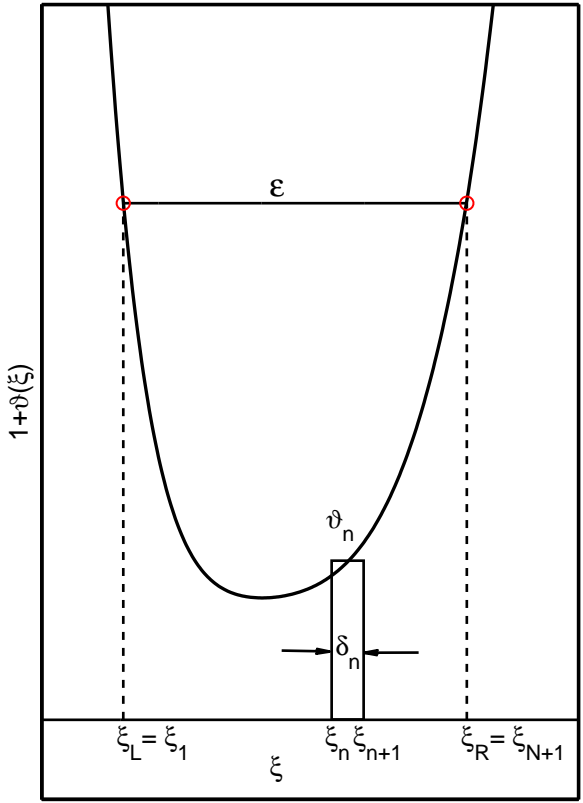}
\end{center}
\caption{The (scalar) potential well $\vartheta(\xi)$, showing effective 
turning points $\xi_{L,R}$ for a typical energy $\varepsilon$.}\label{fig1}
\end{figure}

Next, we partition the interval between the two turning points into $N$ 
segments $\xi_L=\xi_1<\xi_2\dotsm<\xi_N<\xi_{N+1}=\xi_R$. Denoting the width 
of the $n$\textsuperscript{th} segment by $\delta_n$, we choose an arbitrary 
point (a tag) $\xi'_n$ within each segment and replace the scalar potential 
by a piecewise-constant approximation for which the potential in the $n$
\textsuperscript{th} segment is given by $\vartheta_n=\vartheta(\xi'_n)$. 
Consequently, on this segment the system of Eqs. \eqref{eq:1} and 
\eqref{eq:2} admits a general solution of the form
\begin{equation}
\label{eq:3}
\Psi^n = A_n
\begin{pmatrix}1\\{i\lambda_n}
\end{pmatrix}e^{i\kappa_n(\xi-\xi_n)
}+B_n
\begin{pmatrix}1\\{-i\lambda_n}\end{pmatrix}e^{-i\kappa_n(\xi-\xi_n)},
\end{equation}
where $A_n$ and $B_n$ are arbitrary coefficients, and
\begin{align*}
\kappa_n=\sqrt{\varepsilon^2-(1+\vartheta_n)^2},\ \lambda_n=\sqrt{\frac{
\varepsilon-\vartheta_n-1}{\varepsilon+\vartheta_n+1}}.
\end{align*}
Note that in the region of interest $\kappa_n$, $\lambda_n$ are real, hence 
the wave function components $\psi_{1,2}^n$ are oscillatory. In favor of a 
simpler notation, we drop the superscript $n$ from the wave function and 
infer the segment label from the context.

Ensuring the continuity of $\Psi^n$ at the ends of the segment leads to the (
transfer) matrix equation
\begin{equation}\label{eq:4}
\begin{pmatrix}\psi_1(\xi_n)\\\psi_2(\xi_n)\end{pmatrix}= \begin{pmatrix}
{\cos(\kappa_n\delta_n)}&{-\frac{\sin(\kappa_n\delta_n)}{\lambda_n}}\\{\lambda
_n\sin(\kappa_n\delta_n)}&{\cos(\kappa_n\delta_n)}\end{pmatrix}
\begin{pmatrix}\psi_1(\xi_{n+1})\\\psi_2(\xi_{n+1})\end{pmatrix}
\end{equation}
Since we allow only two turning points at this stage, $\lambda_n$ is nonzero 
in each segment. Consequently, the transfer matrices are always well 
defined. Now, left multiplying Eq.~\eqref{eq:4} by $\begin{pmatrix}-\psi_2(
\xi_n)&\psi_1(\xi_n)\end{pmatrix}$ and dividing the resulting equation by $
\psi_1(\xi_n)\psi_1(\xi_{n+1})$, we arrive at
\begin{equation*}
\begin{pmatrix}P_n&1\end{pmatrix}\begin{pmatrix}
{\cos(\kappa_n\delta_n)}&{-\frac{\sin(\kappa_n\delta_n)}{\lambda_n}}\\{\lambda
_n\sin(\kappa_n\delta_n)}&{\cos(\kappa_n\delta_n)}\end{pmatrix}
\begin{pmatrix}1\\-P_{n+1}\end{pmatrix}=0\\
\end{equation*}
where $P_j=-\psi_2(\xi_j)/\psi_1(\xi_j)$. Simplifying this matrix equation 
gives the recurrence formula
\begin{align}\label{eq:5}
&\frac{P_n}{\lambda_n}=\tan\bigg(\tan^{-1}\bigg(\frac{P_{n+1}}{\lambda_n}\bigg
)-\kappa_n\delta_n\bigg),\\
&\Rightarrow\tan^{-1}\bigg(\frac{P_n}{\lambda_n}\bigg)-\tan^{-1}\bigg(\frac{P_
{n+1}}{\lambda_n}\bigg)=z\pi-\kappa_n\delta_n,\nonumber\\
&\hspace{14em} z=0,1,2,\ldots\label{eq:6}
\end{align}
Rearranging Eq.~\eqref{eq:6} and summing over $n$ from 1 to $N$ yields
\begin{multline}
\sum_{n=1}^N\kappa_n\delta_n + \sum_{n=1}^{N-1}\tan^{-1}\bigg(\frac{P_{n+1}}{
\lambda_{n+1}}\bigg)-\tan^{-1}\bigg(\frac{P_{n+1}}{\lambda_n}\bigg)\\
=z\pi
+\tan^{-1}\bigg(\frac{P_{N+1}}{\lambda_N}\bigg)-\tan^{-1}\bigg(\frac{P_1}{
\lambda_1}\bigg)\ z=0,1,2,\ldots \label{eq:7}
\end{multline}
The desired quantization condition emerges as a limit of Eq.~\eqref{eq:7} as $
\delta(\defeq\max\delta_n)\rightarrow 0$. In the limiting event, the 
continuous potential variation $\vartheta(\xi)$ is recovered, with $P_1
\rightarrow P(\xi_L)$ and $P_{N+1}\rightarrow P(\xi_R)$. In 
Appendix~\ref{appA} we prove that (for bound state wave functions) $P(\xi_L)<0<P(\xi_R)$ 
and $|P(\xi_{L,R})|<\infty$. Also, as $\delta\rightarrow 0$, $\lambda_{1,N}
\rightarrow \lambda(\xi_{L,R})=0$. As a result, the `half phase losses' at 
the effective turning points are given by
\begin{align}
\lim_{\delta\rightarrow 0}\tan^{-1}\left(\frac{P_{N+1}}{\lambda_N}\right)&=-
\lim_{\delta\rightarrow 0}\tan^{-1}\left(\frac{P_1}{\lambda_1}\right)=\frac{
\pi}{2},\nonumber\\
\delta&\defeq\max_n\delta_n. \label{eq:8}
\end{align}
Next, we account for the phase contribution of the so-called scattered sub-
waves. Defining the phase contribution for each segment
\begin{align*}
\Delta\phi_n &\defeq\tan^{-1}\left(\frac{P_{n+1}}{\lambda_{n+1}}\right)-\tan^{
-1}\left(\frac{P_{n+1}}{\lambda_n}\right)\\
&=\tan^{-1}\left(\frac{P_{n+1}\left(\lambda_n-\lambda_{n+1}\right)}{P^2_{n+1}+
\lambda_n\lambda_{n+1}}\right)\\
&=-\frac{P_{n+1}\left(\lambda_{n+1}-\lambda_n\right)}{P^2_{n+1}+\lambda_n
\lambda_{n+1}}+\mathcal{O}\left(\left(\lambda_{n+1}-\lambda_n\right)^3\right),
\end{align*}
which results from expanding the inverse tangent in a Taylor series in powers 
of $\left(\lambda_{n+1}-\lambda_n\right)$, we obtain the total phase 
contribution of the scattered sub-waves by
\begin{multline}
\lim_{\delta\rightarrow 0}\sum_{n=1}^{N-1}\Delta\phi_n=-\lim_{
\delta\rightarrow 0}\sum_{n=1}^{N-1}\frac{P_{n+1}\left(\lambda_{n+1}-\lambda_
n\right)}{P^2_{n+1}+\lambda_n\lambda_{n+1}}+\\
\lim_{\delta\rightarrow 0}\sum_{n=1}^{N-1}\mathcal{O}\left(\left(\lambda_{n+1}
-\lambda_n\right)^3\right)=-\int_{\xi_L}^{\xi_R}\frac{P\dot{\lambda}}{P^2+
\lambda^2}\rmd \xi \label{eq:9}
\end{multline}
Thus, in the limit $\delta\rightarrow 0$, Eq.~\eqref{eq:7} takes the form
\begin{equation}\label{eq:10}
\int_{\xi_L}^{\xi_R}\kappa-\left(\frac{P\dot{\lambda}}{P^2+\lambda^2}\right)
\rmd \xi=(z+1)\pi,\quad z=0,1,2...
\end{equation}
We can phrase this result differently by substituting
\begin{align}\label{eq:11}
\lambda^2=\frac{\varepsilon-\vartheta(\xi)-1}{\varepsilon+\vartheta(\xi)+1},\ 
P=-\frac{\psi_2}{\psi_1},
\end{align}
and using Eqs. \eqref{eq:1} and \eqref{eq:2}, obtaining
\begin{align}
&\frac{P}{P^2+\lambda^2}\dot{\lambda}=\frac{\psi_1\psi_2(1+\vartheta(\xi)+
\varepsilon)}{(1+\vartheta(\xi)+\varepsilon)\psi_2^2-(1+\vartheta(\xi)-
\varepsilon)\psi_1^2}\nonumber\\
&\times\frac{-\varepsilon\dot{\vartheta}(\xi)}{\lambda(1+\vartheta(\xi)+
\varepsilon)^2}
=\frac{\varepsilon\dot{\vartheta}(\xi)}{\kappa}\left(\frac{\psi_1\psi_2}{\dot{
\psi_1}\psi_2-\dot{\psi_2}\psi_1}\right)\nonumber\\
&=\frac{\varepsilon\dot{\vartheta}(\xi)}{2\kappa}\left(\frac{\Psi^\dagger\bm{
\sigma}_x\Psi}{\Psi^\dagger\left(-i\bm{\sigma}_y\rmd /\rmd \xi\right)\Psi}
\right) \label{eq:12}\\
&=-\frac{\varepsilon\dot{\kappa}}{4i\sqrt{\varepsilon^2-\kappa^2}}\left(\frac{
\Psi^\dagger[\bm{\sigma}_y,\bm{\sigma}_z]\Psi}{\Psi^\dagger\left(-i\bm{\sigma
}_y\rmd /\rmd \xi\right)\Psi}\right)\nonumber\\
&\hspace*{11em}\because\ 2i\bm{\sigma}_x=[\bm{\sigma}_y,\bm{\sigma}_z]. \label
{eq:13}
\end{align}
Using this result and replacing the Pauli matrices with the Dirac matrices 
gives a representation independent form (Ref. Section~\ref{sec3}) of Eq. 
\eqref{eq:10}. Further, restoring the dimensions of the physical quantities 
and using $n$ (instead of $z$) for the quantum number yields
\begin{multline}
\int_{x_L}^{x_R}\frac{K}{\hbar c} + \bigg(\frac{E}{4\sqrt{K^2-E^2}}\frac{\rmd 
K}{\rmd x}\bigg)\bigg(\frac{\Psi^\dagger[\bm{\alpha},\bm{\beta}]\Psi}{\Psi^
\dagger(c\bm{\alpha}p)\Psi}\bigg)\rmd x\\
=(n+1)\pi,\\  n = 0,1,2,\ldots \label{eq:14}
\end{multline}
where $K=\sqrt{E^2-(mc^2+V(x))^2}$, and the effective turning points satisfy $
V(x_{L,R})+mc^2=\pm E$.

\section{Discussion}\label{sec3}
We begin by describing two symmetries of Eq.~\eqref{eq:14} that are 
physically desirable. 

\subsection{Representation independence}\label{sec3.1} 
In Section~\ref{sec2} we chose a convenient representation of the Dirac 
matrices $\bm{\alpha}(\bm{\beta})=\bm{\sigma}_y(\bm{\sigma}_z)$. However, 
the quantized energy levels are a property of the potential $V(x)$, hence 
should be independent of the chosen representation. This feature is already 
built into the quantization condition and can be shown in the following way. 
We know that the Dirac matrices satisfy the algebra: $\bm{\alpha}^2=\bm{\beta
}^2=\unit$ and $\{\bm{\alpha},\bm{\beta}\}=0$. Given any other 
representation of the Dirac matrices (denoted $\bm{\alpha}',\bm{\beta}'$) 
satisfying the same algebra, (a generalization of) Pauli's fundamental 
theorem \cite{6} asserts that there exists a unique (invertible) matrix $\bm{
S}$ (up to a multiplicative complex constant) such that 
\begin{equation}\label{eq:15}
\bm{\gamma}=\bm{S}^{-1}\bm{\gamma}'\bm{S},\qquad \bm{\gamma}=\bm{\alpha},\bm{
\beta}
\end{equation}
A direct substitution shows that this transformation preserves the structure 
of the Dirac equation (for the same $E$) with the wave function transforming 
as $\Psi'=\bm{S}\Psi$. Additionally, as the Dirac matrices are hermitian, we 
must have $\bm{S}^\dagger=\bm{S}^{-1}$ (unitarity of $\bm{S}$). Using the 
transformation Eq.~\eqref{eq:15} in Eq.~\eqref{eq:14} and noting that $[\bm{
\alpha},\bm{\beta}]=\bm{S}^\dagger[\bm{\alpha}',\bm{\beta}']\bm{S}$ gives
\begin{equation*}
\frac{\Psi^\dagger[\bm{\alpha},\bm{\beta}]\Psi}{\Psi^\dagger(c\bm{\alpha}p)
\Psi}\mapsto\frac{\Psi'^\dagger[\bm{\alpha'},\bm{\beta'}]\Psi'}{\Psi'^\dagger
(c\bm{\alpha'}p)\Psi'}
\end{equation*}
thus confirming the representation independence of the obtained quantization 
condition.

\subsection{Symmetry w.r.t. the sign of $\bm{E}$}\label{sec3.2} 
Although we assumed $E$ to be positive in the derivation of Eq.~\eqref{eq:14}, 
the quantization condition should not depend on the sign of the 
eigenvalue. Since $K$ is a function of $E^2$, the apparent asymmetry stems 
from the factor
\begin{equation*}
\varepsilon\frac{\Psi^\dagger\bm{\sigma}_x\Psi}{\Psi^\dagger\left(-i\bm{\sigma
}_y\rmd /\rmd \xi\right)\Psi}
\end{equation*}
in Eq.~\eqref{eq:12}. Using the properties $\bm{\sigma}_x^2=\unit$, $\bm{
\sigma}_x=\bm{\sigma}_x^\dagger,\ \{\bm{\sigma}_x,\bm{\sigma}_y\}=0$ we obtain
\begin{align*}
\varepsilon\frac{\Psi^\dagger\unit\bm{\sigma}_x\Psi}{\Psi^\dagger\unit\left(-i
\bm{\sigma}_y\rmd /\rmd \xi\right)\Psi}&=\frac{\varepsilon(\bm{\sigma}_x\Psi)
^\dagger\bm{\sigma}_x(\bm{\sigma}_x\Psi)}{(\bm{\sigma}_x\Psi)^\dagger\left(-i
\bm{\sigma}_x\bm{\sigma}_y\rmd /\rmd \xi\right)\Psi}
\\&=\frac{(-\varepsilon)(\bm{\sigma}_x\Psi)^\dagger\bm{\sigma}_x(\bm{\sigma}_x
\Psi)}{(\bm{\sigma}_x\Psi)^\dagger\left(-i\bm{\sigma}_y\rmd /\rmd \xi\right)(
\bm{\sigma}_x\Psi)}
\end{align*}
which in the light of our earlier observation $H\Psi=E\Psi \Leftrightarrow H(
\bm{\sigma}_x\Psi)=-E(\bm{\sigma}_x\Psi)$ establishes the expected symmetry. 
As the effective turning points remain unchanged under $E\mapsto-E$, Eq. 
\eqref{eq:14} also holds for the antiparticle solutions.

Finally, we look at the non-relativistic limit of Eq.~\eqref{eq:14} and 
affirm that it reproduces the Schr\"{o}dinger quantization formula found by 
Cao et al. \cite{1,2,3} in this limit. The non-relativistic limit con cerns 
energies $E\approx mc^2$; and is easily demonstrated by the replacements: (1)
 $mc^2+V(x)+E\approx 2mc^2$ and (2) $mc^2+V(x)-E\approx V(x)-E_s$, where $E_
s \defeq E - mc^2$ (the `shifted' energy) is a small quantity (Cf. Sec. 4.4 
of Ref.~[\onlinecite{7}])\footnote{The symbol `$\approx$' implies that the equation under 
consideration resulted from applying prescriptions (1) and (2) to its (
otherwise) relativistic counterpart.}.
Using this prescription, Eq.~\eqref{eq:1} and \eqref{eq:2} (after replacing 
the dimensions) become
\begin{align}
&\frac{\rmd \psi_1}{\rmd x}\approx\frac{2mc}{\hbar}\psi_2\label{eq:16}\\
&\frac{\rmd \psi_2}{\rmd x}\approx\frac{V(x)-E_s}{\hbar c}\psi_1\label{eq:17}
\end{align}
which further decouple to yield the one-dimensional Schr\"{o}dinger equation
\begin{equation}\label{eq:18}
\frac{\rmd ^2\psi_1}{\rmd x^2}\approx\frac{2m}{\hbar^2}\left(V(x)-E_s\right)
\psi_1
\end{equation}
From Eq.~\eqref{eq:11} we arrive at the limiting forms of $\lambda$ and $P$
\begin{align}\label{eq:19}
&\lambda\approx\sqrt{\frac{E_s-V(x)}{2mc^2}}
& P\approx\frac{\hbar}{2mc}P_s
\end{align}
with
\begin{equation}\label{eq:20}
P_s\defeq-\frac{1}{\psi_1}\frac{\rmd \psi_1}{\rmd x}
\end{equation}
where the subscript $s$ is intended for notational homogeneity of later 
equations. Proceeding further, we substitute these results into 
Eq.~\eqref{eq:10} making the following observations
\begin{align*}
\kappa&\approx\sqrt{\frac{2}{mc^2}\left(E_s-V(x)\right)}\\
&=\frac{\hbar}{mc}\sqrt{\frac{2m}{\hbar^2}\left(E_s-V(x)\right)}\defeq\frac{
\hbar}{mc}\kappa_s
\end{align*}
\begin{align*}
\frac{P}{P^2+\lambda^2}\frac{\rmd \lambda}{\rmd \xi}&\approx\frac{P_s}{P_s^2+
\frac{2m}{\hbar^2}\left(E_s-V(x)\right)}\left(2\frac{\rmd \lambda}{\rmd x}
\right)\\
&=\frac{P_s}{P_s^2+\kappa_s^2}\left(-\frac{\rmd V/\rmd x}{\sqrt{2mc^2\left(E_s
-V(x)\right)}}\right)\\
&=\frac{P_s}{P_s^2+\kappa_s^2}\left(\frac{\rmd }{\rmd x}\sqrt{\frac{2}{mc^2}
\left(E_s-V(x)\right)}\right)\\
&=\frac{\hbar}{mc}\frac{P_s}{P_s^2+\kappa_s^2}\frac{\rmd \kappa_s}{\rmd x}
\end{align*}
thus arriving at ($\rmd \xi=mc/\hbar\ \rmd x$)
\begin{equation}\label{eq:21}
\int_{x_{L_s}}^{x_{R_s}}\kappa_s-\frac{\rmd \kappa_s}{\rmd x}\left(\frac{P_s}{
P_s^2+\kappa_s^2}\right)\rmd x\approx(z+1)\pi,\quad z=0,1,2...
\end{equation}
where $x_{L_s}$($x_{R_s}$) denotes the left (right) `classical' turning point 
given by the smaller (larger) of the two roots of the equation $V(x)=E_s$. 
The above equation is the correct non-relativistic limit of Eq.~\eqref{eq:14}, 
which agrees with the well-known quantization formula describing the bound 
states of the Schr\"{o}dinger equation \cite{1,2,3}.

\section{Numerical results}\label{sec4}
We turn now to a discussion of computational aspects of Eq.~\eqref{eq:14} and 
outline some applications of the same. Before proceeding, we point to an 
apparent impediment, namely, the integrand singularities at the turning 
points $x_{L,R}$ arising as a result of
\begin{equation}
\frac{E}{\sqrt{K^2-E^2}}\frac{\rmd K}{\rmd x}=i\frac{E}{K}\frac{\rmd V}{\rmd x
},\quad K(x_{L,R})=0.\label{eq:22}
\end{equation}
Without further qualification, such points might cause the integral to 
diverge. Luckily, this does not happen, as we rigorously show in Appendix 
\ref{appB}. Furthermore, convergence requirements do not restrict the class 
of admissible potentials to which our analysis applies. With that caveat, we 
look at the problem of determining energy eigenvalues of the Dirac equation 
for a given scalar potential $V(x)$.

The collection of one-dimensional potentials for which the Dirac equation is 
closed-form solvable is rather small. Even for the modest (scalar) `simple' 
harmonic oscillator, the wave-functions cannot be obtained in closed 
analytic form. To our knowledge the one-dimensional Woods-Saxon potential 
\cite{8}, the linear confining potential \cite{5,9} and the scalar 
exponential potential \cite{10} have enjoyed closed form solutions so far. 
Thus, an accurate quantization formula becomes particularly useful.

Consider the linear confining potential $V=g|x|,\ g>0$ whose wave functions 
are written in terms of the Hermite function $H_{\nu}(x)$ \cite{hermit}. 
Adopting the conventions $\hbar=c=1$, $\nu=E^2/2g$ given in Ref.~[\onlinecite{9}], we 
wish to incorporate the known wave functions into Eq.~\eqref{eq:14} and 
reproduce the energy eigenvalues, which are otherwise obtained as solutions 
of the transcendental equation $H_{\nu}^2(\alpha)=2\nu H_{\nu-1}^2(\alpha)$, 
where $\alpha=m/\sqrt{g}$.

Note that the authors of \cite{5,9} solve this eigenvalue problem in the so 
called Jackiw–-Rebbi representation of the Dirac equation corresponding to $
\mathbf{\alpha}=\mathbf{\sigma}_y$ and $\mathbf{\beta}=\mathbf{\sigma}_x$ in 
which case Eq.~\eqref{eq:14} takes the form
\begin{equation}
\int_{x_L}^{x_R}K+\frac{E\dot{V}}{2K}\left(\frac{\psi_1^2-\psi_2^2}{\dot{\psi}
_1\psi_2-\dot{\psi}_2\psi_1}\right)\rmd x=(n+1)\pi,\label{eq:23}
\end{equation}
where the overhead dot now denotes differentiation w.r.t. $x$. Substituting 
the wave functions in the l.h.s of Eq.~\eqref{eq:23}, we obtain
\begin{widetext}
\begin{equation}\label{eq:24}
I_{\alpha}(\nu)=\pi\nu-\alpha\sqrt{2\nu-\alpha^2}\nonumber-2\nu\tan^{-1}\left(
\frac{\alpha}{\sqrt{2\nu-\alpha^2}}\right)+\int_{\alpha}^{\sqrt{2\nu}}\left(
\frac{H_{\nu}^2(x)-2\nu H_{\nu-1}^2(x)}{H_{\nu}^2(x)-H_{\nu-1}(x)H_{\nu+1}(x)
}\right)\frac{\rmd x}{\sqrt{2\nu-x^2}}
\end{equation}
\end{widetext}
the derivation of which makes use of the recurrence formula $H_{\nu+1}(x)=2xH_
{\nu}(x)-2\nu H_{\nu-1}(x)$ \cite{hermit}. The quantization condition can be 
rewritten as $I_{\alpha}(\nu)=(n+1)\pi$. Note that for $E<mc^2+\min V$ there 
can be no turning points. Therefore, bound state solutions can only be 
expected for $\nu>\alpha^2/2$.

\begin{figure}
\begin{center}
\includegraphics[width=\columnwidth]{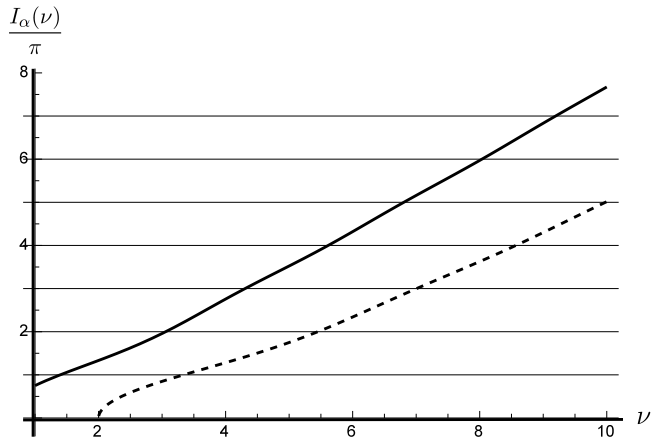}
\end{center}
\caption{Plot of $\frac{I_{\alpha}(\nu)}{\pi}$ for $\alpha=1$ (continuous line
) $\alpha=2$ (broken line).}\label{fig2}
\end{figure}

In Figure \ref{fig2} we plot the graphs of $I_{\alpha}(\nu)/\pi$ for $\alpha=1
,2$ as a function of $\nu$. These curves intersect the horizontal integer-
lines (shown in the same figure) precisely at the locations of the energy 
eigenvalues, thus confirming remarkably the validity of Eq.~\eqref{eq:14}. 
For comparison, we collect the graphical intersection points (computed using 
bisection search) in Table \ref{tab1} where the exact eigenvalues obtained 
by Cavalcanti \cite{9} are also listed.

Now, concerning potentials for which the wave functions are not obtainable in 
terms of known functions, a numerical routine can be outlined for the Dirac 
problem (following the prescription given in \cite{11} for the 
non-relativistic ATM-quantization formula) yielding eigenvalue estimates, whose 
accuracy is only appreciable for larger eigenvalues--a feature that we do 
not understand completely at present. Therefore, a discussion of this scheme 
is not given here.

\begin{table*}[t]
\centering
\caption{First five values of $\nu=E^2/2g$ for select values of $\alpha=m/
\sqrt{g}$. ($\hbar=c=1$).\label{tab1}}
\begin{ruledtabular}
\begin{tabular}{cccccc}
        & \multicolumn{2}{c}{$\alpha=1$} & & \multicolumn{2}{c}{$\alpha=2$} \\
\cline{2-3} 
\cline{5-6}
$\nu$   & Exact      & ATM    & & Exact      & ATM \\
\hline
$\nu_0$ & 1.396274\underline{44057259}\footnote{Underlined digits are not 
given in \cite{9}, and can be obtained by solving $H_{\nu}^2(\alpha)=2\nu H_{
\nu-1}^2(\alpha)$.\label{fa}} & 1.39627444809303  && 3.338595\underline{
40177509}$^{\rm a}$  & 3.33859536647797\\
$\nu_1$ & 3.056760\underline{24015993}     & 3.05676024192944  && 5.452160
\underline{76495126}      & 5.45216075601056\\
$\nu_2$ & 4.306276\underline{64769999}     & 4.30627665789798  && 7.006087
\underline{30469830}      & 7.00608729608357\\
$\nu_3$ & 5.615210\underline{82352847}     & 5.61521084997803  && 8.568945
\underline{86286508}      & 8.56894588172436\\
$\nu_4$ & 6.804771\underline{21323347}     & 6.80477123537566  && 9.978608
\underline{33615064}      & 9.97860836439766
\end{tabular}
\end{ruledtabular}
\end{table*}

Next, we briefly mention another interesting application, to motivate further 
work. In fact, a detailed account of the same will be addressed in a follow-
up paper. Particularly, we consider the problem of reconstructing the ground 
state wave function $\Psi_o$, for a given potential $V(x)$. The basic scheme 
is to assume the ground state energy $E_o$ to be a parameter and invert Eq. 
\eqref{eq:14} to obtain $\Psi_o(E_o)$. Although this is easier said than 
done, for simple power-law potentials $V(x)= g|x|^n(g>0,\ n\in\mathbb{N})$, 
using the ansatz
\begin{equation*}
P=\exp\left(\int\frac{\rmd \xi}{\sum_{j=0}^{\infty}a_j\xi^j}\right)
\end{equation*}
in Eq.~\eqref{eq:10}, one can obtain the coefficients $a_j$ by comparing 
powers of $\varepsilon_o$ on either side of the resulting equation. We would 
then resort to a variational principle to minimize the energy functional $E[
\Psi]$, which is the expectation of the Dirac Hamiltonian w.r.t. the 
obtained parametric wave function, to find $E_o$. Similarly, recovery of the 
potential with knowledge of the ground state wave function is conceivable. 
However, this inversion problem is significantly complicated by the presence 
of the effective turning points, which depend on the potential implicitly.

\section{Conclusion}\label{sec5}
In this paper we analyzed the bound states of the one-dimensional Dirac 
equation for a scalar potential well $V(x)$ (satisfying two constraints laid 
in Section~\ref{sec1}), obtaining an exact energy quantization formula that 
extends the previously known analog for the Schr\"{o}dinger equation. In 
fact, we could show that our formula reproduces this result in the 
appropriate non-relativistic limit. Further, we discussed the physically 
desirable symmetries of the quantization formula and applied the same to 
compute the energy eigenvalues for the potential $V=g|x|$.

Before concluding, we invite interested readers to strive for a completion of 
our relativistic quantization formula that would apply to an arbitrary 
scalar potential. Such a generalization must address two cases for which our 
analysis fails. First, the case of non-confining potentials: consider, for 
example, the one-dimensional Dirac equation with the scalar exponential 
potential $V(x)=Ae^{-\lambda x},\ \lambda>0$ \cite{10} which, despite being 
repulsive everywhere, supports bound states for $A<0$. This is a consequence 
of the nature of coupling in the Dirac equation. Note that the exponential 
potential yields only one effective turning point, thus failing to satisfy (
at least) one of the constraints laid in Section~\ref{sec1}. As a result, 
our analysis would not apply to this case. Second, one must account for 
potentials that yield more than two effective turning points (a double well 
potential for instance). Unfortunately, the generalization of the analytic 
transfer matrix method to such potentials even for the non-relativistic 
problem stands open.

\subsection*{Acknowledgements} 
I would like to thank Dr. Hemalatha Thiagarajan, Dr. S.D. Mahanti and Dr. 
Mike Wilkes for critically reviewing the manuscript. Garv Chauhan and 
Avinash Prabhakar helped with the numerical results in Section~\ref{sec4}. 
Thanks are also due to the anonymous reviewers for their valuable suggestions.

\appendix
\section{}\label{appA}
Let $\Psi$ be a bound state wave function, with its components $\psi_{1,2}$ 
satisfying Eqs. \eqref{eq:1} and \eqref{eq:2}. Based on the properties of an 
admissible bound state wave function we deduce an useful property of the 
auxiliary function
\begin{equation}
P(\xi)=-\frac{\psi_2}{\psi_1}\label{eq:A.1}
\end{equation}
which is well defined (and bounded) at any finite $\xi$ excepting the nodes 
of $\psi_1$. We show that
\begin{equation}
P(\xi_L)<0<P(\xi_R),\label{eqA.2}
\end{equation}
where $\xi_L(\xi_R)$ denotes the left(right) effective turning point (defined 
by $\vartheta(\xi_{L,R})+1=\varepsilon)$. Using this result we obtain the `
half phase losses' at these turning points to be $\pi/2$ (Section~\ref{sec2})
. We recall that the chosen representation of the Dirac matrices allows us 
to work with a real $\Psi$ \cite{4}. Hence, the inequality in proposition 
\eqref{eqA.2}
 is valid. In proving the above proposition, we require the following lemma.
\begin{lemma}
The components of $\Psi$ cannot vanish simultaneously at any finite $
\xi$. Equivalently, $\nexists~\xi\in\mathbb{R}\vert\Psi(\xi)=0$.
\end{lemma}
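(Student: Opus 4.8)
The plan is to argue by contradiction using the uniqueness theorem for linear first-order ODE systems. Suppose there exists a point $\xi_0\in\mathbb{R}$ at which $\Psi(\xi_0)=0$, i.e. $\psi_1(\xi_0)=\psi_2(\xi_0)=0$. First I would observe that Eqs.~\eqref{eq:1} and~\eqref{eq:2} constitute a homogeneous linear system $\dot\Psi = M(\xi)\Psi$ with coefficient matrix
\begin{equation*}
M(\xi)=\begin{pmatrix}0 & 1+\vartheta(\xi)+\varepsilon\\ 1+\vartheta(\xi)-\varepsilon & 0\end{pmatrix},
\end{equation*}
whose entries are continuous (indeed, locally integrable suffices) on any interval where the potential is well behaved. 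The key step is then to invoke the Picard--Lindel\"of / Carath\'eodory existence-and-uniqueness theorem: through the initial data $\Psi(\xi_0)=0$ there passes exactly one solution of the system, and that solution is the identically-zero function $\Psi\equiv 0$. Hence any solution vanishing at a single point vanishes everywhere.

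Next I would close the argument by noting that the trivial solution $\Psi\equiv 0$ is not an admissible bound state: a bound state wave function must be normalizable and nonzero (its norm is fixed to unity, or at any rate positive). This contradicts the assumption that $\Psi$ is a bound state wave function, and therefore no such $\xi_0$ can exist, proving $\Psi(\xi)\neq 0$ for all finite $\xi$. I would also remark that this immediately implies the two components cannot share a common zero, so that at a node of $\psi_1$ one necessarily has $\psi_2\neq 0$ (and vice versa) --- the fact that makes $P(\xi)=-\psi_2/\psi_1$ blow up to $\pm\infty$ rather than becoming indeterminate at such points, which is exactly what is needed for the half-phase-loss computation in Section~\ref{sec2} and for the sign analysis in proposition~\eqref{eqA.2}.

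The only subtlety --- and the step I would treat most carefully --- is checking that the hypotheses of the uniqueness theorem genuinely hold at the point $\xi_0$ in question. If $\xi_0$ happens to coincide with an effective turning point, the matrix $M$ is still perfectly finite and continuous there (the turning points are zeros of $\kappa$ and $\lambda$, not singularities of the ODE coefficients), so uniqueness applies without modification. The potential $\vartheta$ is assumed to be a sensible confining well, so $M$ is at worst piecewise continuous; if one wishes to allow finitely many jump discontinuities in $\vartheta$, the Carath\'eodory version of the uniqueness theorem covers this case, since $M\in L^1_{\mathrm{loc}}$. I do not expect any real obstacle here --- the argument is essentially the standard "a nontrivial solution of a linear homogeneous ODE cannot vanish at a point together with all the data that determine it," which for a first-order system just means the solution vector itself.

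As a self-contained alternative that avoids quoting the ODE uniqueness theorem, one can argue directly: from $\psi_1(\xi_0)=\psi_2(\xi_0)=0$ and Eqs.~\eqref{eq:1}--\eqref{eq:2} one gets $\dot\psi_1(\xi_0)=\dot\psi_2(\xi_0)=0$, and differentiating the system repeatedly shows by induction that all derivatives of both components vanish at $\xi_0$; since the solutions are (real-)analytic wherever $\vartheta$ is analytic, this forces $\Psi\equiv 0$ near $\xi_0$ and hence everywhere by continuation, again contradicting normalizability. I would present the uniqueness-theorem proof as the main line since it needs no analyticity assumption on $\vartheta$, and mention the analytic argument only as a remark.
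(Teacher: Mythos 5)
Your proof is correct, and it takes a cleaner route than the paper's. The paper decouples Eqs.~\eqref{eq:1}--\eqref{eq:2} into the second-order equation \eqref{eqA.5} for $\psi_1$, writes $\Psi$ in terms of two linearly independent solutions $\psi_1^{(1)},\psi_1^{(2)}$, and derives a contradiction from the non-vanishing of their Wronskian at $\xi_o$ via the linear system \eqref{eqA.5a}. That is at heart the same uniqueness principle you invoke, but dressed in second-order language, and it comes at a cost: Eq.~\eqref{eqA.5} has coefficients $\dot{\vartheta}/(\varepsilon+\vartheta+1)$ and $(\varepsilon+\vartheta+1)/(\varepsilon-\vartheta-1)$, the latter of which is singular precisely at the effective turning points, so the paper must append a footnote appealing to the theory of singular differential equations to justify the basis $\psi_1^{(1)},\psi_1^{(2)}$ there; it also tacitly requires $\vartheta$ to be differentiable. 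Your argument works directly with the first-order system $\dot\Psi=M(\xi)\Psi$, whose coefficient matrix is perfectly regular everywhere (including at the turning points, as you correctly note --- those are zeros of $\kappa$ and $\lambda$, not singularities of $M$), and Picard--Lindel\"of/Carath\'eodory then gives $\Psi\equiv 0$ from $\Psi(\xi_0)=0$, contradicting normalizability. This is both more elementary and more general (continuity, or even local integrability, of $\vartheta$ suffices). The one thing worth keeping from the paper's framing is that the Wronskian viewpoint is what gets reused later when representing the general solution on each segment; but for the lemma itself your version is the tighter proof.
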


\begin{proof} 
Suppose there exists a $\xi_o$ such that $\Psi(\xi_o)=0$. 
Consider a finite interval containing $\xi_o$ in which the wave function can 
be represented as ($A$, $B$ arbitrary coefficients)
\begin{equation}
\Psi(\xi)=\begin{pmatrix}1\\\frac{~~~1}{1+\vartheta(\xi)+\varepsilon}\frac{
\rmd }{\rmd \xi}\end{pmatrix}\left(A\psi_1^{(1)}(\xi)+B\psi_1^{(2)}(\xi)
\right)
\end{equation}
where $\psi_1^{(1)},\psi_1^{(2)}$ are the linearly independent solutions of 
the second order ODE
\begin{equation}
\ddot{\psi}_1-\left(\frac{\dot{\vartheta}(\xi)}{\varepsilon+\vartheta(\xi)+1}
\right)\dot{\psi}_1+\left(\frac{\varepsilon+\vartheta(\xi)+1}{\varepsilon-
\vartheta(\xi)-1}\right)\psi_1=0\label{eqA.5}
\end{equation} that results from decoupling Eqs. \eqref{eq:1} and \eqref{eq:2}.
\footnote{Although Eq.~\eqref{eqA.5} becomes singular at the effective 
turning points, solutions $\psi_1^{(1)},\psi_1^{(2)}$ can be prescribed (in 
the vicinity of these points) in accordance with the theory of singular 
differential equations (Cf. Sec 2.7 of Ref.~[\onlinecite{12}])}

Since $\Psi$ vanishes at $\xi_o$, the matrix equation
\begin{equation}  \label{eqA.5a}
\begin{pmatrix}
{\psi_1^{(1)}(\xi_o)}&{\psi_1^{(2)}(\xi_o)}\\{\dot{\psi}_1^{(1)}(\xi_o)}
&{\dot{\psi}_1^{(2)}(\xi_o)}\end{pmatrix}\begin{pmatrix}A\\B\end{pmatrix}=0
\end{equation}
must hold. Thus, in order to prevent the wave function from vanishing 
identically (i.e. $A=B=0$) on the interval, we must have $W(\psi_1^{(1)}(\xi_
o),\psi_1^{(2)}(\xi_o))=0$ where $W$ is the Wronskian of the two functions. 
But this would contradict the linear independence of the solutions $\psi_1^{(
1)},~\psi_1^{(2)}$ at $\xi_o$, hence such a $\xi_o$ does not exist.
\end{proof}
The regions where $\vartheta(\xi)+1<(>)\varepsilon$ are designated as 
allowed (forbidden) regions. Consider the properties 
\begin{description}
  \item[P1] $\psi_{1,2}\rightarrow 0$ as $\vert\xi\vert\rightarrow\infty$
  \item[P2] $\psi_{1,2}\neq 0$ for any $\xi$ (no nodes) in the forbidden region
\end{description}
which hold for any bound state wave function. We now prove 
inequality \eqref{eqA.2}.

\begin{proof} 
Since Eqs. \eqref{eq:1} and \eqref{eq:2} remain form invariant under the 
transformation $\xi\mapsto-\xi(\Rightarrow\xi_L\mapsto\xi_R)$ and $\psi_1
\mapsto-\psi_1$ while $P\mapsto-P$, it suffices to prove any one of the two 
inequalities in \eqref{eqA.2}. We focus on the right effective turning point 
$\xi_R$. The truth of proposition \eqref{eqA.2} rejects the possibility $
\mathrm{sgn}[\psi_1(\xi_R )]=\mathrm{sgn}[\psi_2(\xi_R )]$ where $\mathrm{sgn
}[~]$ is the signum function. To prove this we let $\psi_{1,2}(\xi_R)>0$. 
From \textbf{P2}, it follows that $\psi_1>0$ for all $\xi>\xi_R$ (a 
forbidden region). Since $\psi_2(\xi_R)>0$, Eq.~\eqref{eq:1} implies that $
\dot{\psi}_1(\xi_R)>0$. Since $\psi_1$ is increasing at $\xi_R$, it must 
attain at least one maximum before vanishing asymptotically as $
\xi\rightarrow\infty$ (property \textbf{P1}), remaining positive-definite 
all along. Clearly, at the site of this maximum, $\dot{\psi}_1=0
\Rightarrow\psi_2=0$, which contradicts property \textbf{P2} for $\psi_2$. 
The other possibility $\psi_{1,2}(\xi_R)<0$ (a `reflection' of the previous 
case) is readily contradicted from form-invariance of Eqs. \eqref{eq:1} and 
\eqref{eq:2} under the transformation $\Psi\mapsto-\Psi$. Thus, $\mathrm{sgn}
[\psi_1(\xi_R )]=-\mathrm{sgn}[\psi_2(\xi_R )]\Rightarrow P(\xi_R)>0$.
\end{proof}

Finally, we show that
\begin{equation}
|P(\xi_{L,R})|<\infty\label{eqA.6}
\end{equation}
\begin{proof} 
Since $\psi_1$ and $\psi_2$ cannot vanish simultaneously (using the above 
Lemma), we need only show that an effective turning point cannot be a node 
of $\psi_1$. Consider $\xi_R$ as before. Assume $\psi_1(\xi_R)=0$. From 
\textbf{P1}, $\psi_1\rightarrow 0$ as $\xi\rightarrow\infty$. Thus, $\psi_1$ 
must attain at least one minimum (maximum) if $\psi_2(\xi_R)<0(>0)$ ($
\because \mathrm{sgn}[\dot{\psi_1}]=\mathrm{sgn}[\psi_2]$), remaining 
negative (positive) definite all along. Similar to our earlier argument, at 
the site of this extremum $\psi_2=0$, which contradicts property \textbf{P2}
. Thus, $\psi_1(\xi_R)\neq 0$. 
\end{proof}

\section{}\label{appB}
In this appendix we show that the integral (Eq.~\eqref{eq:10})
\begin{equation}
I=\int_{\xi_L}^{\xi_R}\frac{P\dot{\lambda}}{P^2+\lambda^2}\rmd \xi\label{B.1}
\end{equation}
exists for all potentials $\vartheta(\xi)$ that satisfy the constraints laid 
in Section~\ref{sec1} namely, (i) $\mathrm{min}\vartheta+1\geq 0$ and (ii) $1
+\vartheta(\xi)=\varepsilon$ holds only at the \textit{two} effective 
turning points $\xi_{L,R}$. In the analysis that follows, it is important to 
bear in mind that the components $\psi_{1,2}$ of a bound state wave function 
$\Psi$ have only finite number of zeros (or nodes) for $\xi_L<\xi<\xi_R$. 
Furthermore, between two consecutive nodes of $\psi_1$ there exists \textit{
exactly} one node of $\psi_2$ and vice-versa, which is a consequence of the 
fact that between two consecutive nodes of $\psi_1$ (say) there exists 
\textit{at least} one extreme point of $\psi_1$ at which $\psi_2\propto\dot{
\psi_1}=0$ (see Eq.~\eqref{eq:2}). Since this argument holds good for both $
\psi_1$ and $\psi_2$, there could \textit{at the most} be one node of $\psi_2
$ between two consecutive nodes of $\psi_1$. 

Now, we show that $I$ does not fail to exist due to the singularity at $\xi_R
$. To do so, we focus on an interval $[\xi_a,\xi_R)$ where $\xi_a$ is 
greater than the largest among the nodes of $\psi_1$, $\psi_2$ and the (only)
 point of minimum of $\vartheta$ (denoted by $\xi_c$). Defining
\begin{equation}
I_a\defeq\int_{\xi_a}^{\xi_R}\frac{P\dot{\lambda}}{P^2+\lambda^2}\rmd 
\xi\label{B.2}
\end{equation}
we have
\begin{equation}
|I_a|\leq\int_{\xi_a}^{\xi_R}\left|\frac{P\dot{\lambda}}{P^2+\lambda^2}\right|
\rmd \xi\leq\int_{\xi_a}^{\xi_R}\left|\frac{\dot{\lambda}}{P}\right|\rmd 
\xi\label{B.3}
\end{equation}
Using Eq.~\eqref{eq:11} we find
\begin{equation*}
|\dot{\lambda}|= \frac{\varepsilon}{1+\vartheta+\varepsilon}\frac{|\dot{
\vartheta}|}{\sqrt{\varepsilon^2-(1+\vartheta)^2}}
\end{equation*}
Also, as $1+\mathrm{min}\vartheta\geq 0$ (constraint (i)),
\begin{equation*}
\frac{\varepsilon}{1+\vartheta+\varepsilon}\leq1
\end{equation*}
combining these results, we obtain
\begin{align}
|I_a|\leq \int_{\xi_a}^{\xi_R}\frac{|\dot{\vartheta}|}{\sqrt{\varepsilon^2-(1+
\vartheta)^2}}\frac{\rmd \xi}{|P|}\nonumber\\=\int_{\xi_a}^{\xi_R}\frac{\dot{
\vartheta}}{\sqrt{\varepsilon^2-(1+\vartheta)^2}}\frac{\rmd \xi}{|P|}
\label{eqB.4}
\end{align}
since $\dot{\vartheta}(\xi)>0$ for $\xi>\xi_c$. Integrating the r.h.s. of 
\eqref{eqB.4} by parts we obtain
\begin{align*}
|I_a|&\leq \frac{\pi}{2|P(\xi_R)|}-\frac{\sin^{-1}\left(\frac{1+\vartheta(\xi_
a)}{\varepsilon}\right)}{|P(\xi_a)|}\\&-\int_{\xi_a}^{\xi_R}\sin^{-1}\left(
\frac{1+\vartheta}{\varepsilon}\right)\frac{\rmd }{\rmd \xi}\left(\frac{1}{|P
|}\right)\rmd \xi
\end{align*}
Note that $|P(\xi_a)|$ is non-zero and finite, owing to the choice of the 
point $\xi_a$ and so is $|P(\xi_R)|$ (inequalities~\eqref{eqA.2} 
and~\eqref{eqA.6}). The r.h.s above can be bounded from above by its absolute value 
leading to
\begin{equation}
|I_a|\leq \frac{\pi}{2}\left(\frac{1}{|P(\xi_R)|}+\frac{1}{|P(\xi_a)|}+\int_{
\xi_a}^{\xi_R}\left|\frac{\rmd }{\rmd \xi}\left(\frac{1}{|P|}\right)\right|
\rmd \xi\right)\label{B.5}
\end{equation}
Note that, $\mathrm{sgn}(P(\xi))=\mathrm{sgn}(P(\xi_R))$ for all $\xi_a<\xi<
\xi_R$, as $P$ doesn't cross the $\xi$ axis (no nodes of $\psi_2$). Thus, $|P
|=P$ (see inequality \eqref{eqA.2}). Furthermore, the derivative of $1/P$ 
has no zeros in this interval, since 
\begin{equation*}
\frac{\rmd }{\rmd \xi}\left(\frac{1}{P}\right)=0\Rightarrow P^2=\frac{1+
\vartheta-\varepsilon}{1+\vartheta+\varepsilon}<0
\end{equation*}
which refutes the fact that $P$ is real in the chosen representation of the 
Dirac matrices. Therefore, 
\begin{align*}
\int_{\xi_a}^{\xi_R}\left|\frac{\rmd }{\rmd \xi}\left(\frac{1}{|P|}\right)
\right|\rmd \xi&=\int_{\xi_a}^{\xi_R}\frac{\rmd }{\rmd \xi}\left(\frac{1}{P}
\right)\rmd \xi\\&=\frac{1}{P(\xi_R)}-\frac{1}{P(\xi_a)}
\end{align*}
Substituting this result in inequality \eqref{B.5} we obtain
\begin{equation}
|I_a|\leq\frac{\pi}{P(\xi_R)}
\end{equation}
which is finite (due to inequalities \eqref{eqA.2} and \eqref{eqA.6}). A 
similar argument works for the left turning point.


\begin{thebibliography}{20}%
\bibitem{1} Cao, Zhuangqi, et al. ``Quantization scheme for arbitrary one-
dimensional potential wells." Physical Review A 63.5 (2001): 054103.

\bibitem{2} Z. Cao,  {\&} C. Yin,   Advances in One-Dimensional Wave 
Mechanics. Springer (2014). 

\bibitem{3} H. Ying, Z. Fan-Ming, Y. Yan-Fang,  {\&} L. Chun-Fang,  Energy 
eigenvalues from an analytical transfer matrix method. Chinese Physics B, 
19(4), 040306 (2010).

\bibitem{4}  Coutinho, F. A. B., Nogami, Y., {\&} Toyama, F. M. (1988). 
General aspects of the bound state solutions of the one‐dimensional Dirac 
equation. American Journal of Physics, 56(10), 904-907. 

\bibitem{5} Hiller, J. R. (2002). Solution of the one--dimensional Dirac 
equation with a linear scalar potential. American Journal of Physics, 70(5), 
522-524.

\bibitem{6} Shirokov, D. S. (2014). Method of generalized contractions and 
Pauli's theorem in Clifford algebras. arXiv preprint arXiv:1409.8163.

\bibitem{7} Strange, P. (1998). Relativistic Quantum Mechanics: with 
applications in condensed matter and atomic physics. Cambridge University 
Press. 

\bibitem{8} Kennedy, P. (2002). The Woods–Saxon potential in the Dirac 
equation. Journal of Physics A: Mathematical and General, 35(3), 689. 

\bibitem{9} Cavalcanti, R. M. (2002). Comment on “Fun and frustration with 
quarkonium in a 1+1 dimension,” by RS Bhalerao and B. Ram [Am. J. Phys. 69 (7
), 817–818 (2001)]. American Journal of Physics, 70(4), 451-452.

\bibitem{10} de Castro, A. S., {\&} Hott, M. (2005). Exact closed-form 
solutions of the Dirac equation with a scalar exponential potential. Physics 
Letters A, 342(1), 53-59.

\bibitem{11} A. Hutem,  {\&} C. Sricheewin,  Ground-state energy eigenvalue 
calculation of the quantum mechanical well $V(x) = 1/2 kx^2+\lambda x^4$ 
via analytical transfer matrix method. European Journal of Physics, 29(3), 577 (2008).

\bibitem{12} NIST Digital Library of Mathematical Functions. http://
dlmf.nist.gov/, Release 1.0.10 of 2015-08-07. 
Online companion to [OLBC10].

\bibitem{hermit} Lebedev, N. N., {\&} Silverman, R. A. (1972). Special 
functions and their applications. Courier Corporation.

\bibitem{rev} Vafek, O., {\&} Vishwanath, A. (2013). Dirac fermions in solids-
from high Tc cuprates and graphene to topological insulators and Weyl 
semimetals, Annual Review of Condensed Matter Physics, Vol: 5  Pages: 83-112, 
DOI: 10.1146/annurev-conmatphys-031113-133841 (2014)
\end{thebibliography}
\end{document}